\newcommand{\La}{\mathcal{L}}
\newcommand{\Z}{\mathcal{Z}}
\newcommand{\R}{\mathbb{R}}
\newcommand{\C}{\mathbb{C}}
\newcommand{\N}{\mathbb{N}}
\newcommand{\ud}{\mathrm{d}}
\newcommand{\UI}{\mathrm{I}}
\renewcommand\Re{\operatorname{Re}}
\newcommand{\G}{\mathit{\Gamma}}
\newtheorem{thm}{Theorem}
\newtheorem{lemma}{Lemma}
\newtheorem{cor}{Corollary}
\renewcommand\@biblabel[1]{#1.}
\title{Relating Zeta Functions of Discrete and Quantum Graphs}
\author{Jonathan Harrison and Tracy Weyand}
\date{}
\newcommand{\Address}{{
  \bigskip
  \footnotesize

 \noindent J.~M.~Harrison: Department of Mathematics, Baylor University, One Bear Place \# 97328, Waco, TX 76798-7328; \texttt{jon\_harrison@baylor.edu}\\
 
 \noindent T.~Weyand: Department of Mathematics, Baylor University, One Bear Place \# 97328, Waco, TX 76798-7328; \texttt{weyand@rose-hulman.edu}\\ Currently At: Department of Mathematics, Rose-Hulman Institute of Technology, 5500 Wabash Avenue, Terre Haute, IN 47803 

}}
\begin{document}

\maketitle
\abstract{
We write the spectral zeta function of the Laplace operator on an equilateral metric graph in terms of the spectral zeta function of the normalized Laplace operator on the corresponding discrete graph. To do this, we apply a relation between the spectrum of the Laplacian on a discrete graph and that of the Laplacian on an equilateral metric graph. As a by-product, we determine how the multiplicity of eigenvalues of the quantum graph, that are also in the spectrum of the graph with Dirichlet conditions at the vertices, depends on the graph geometry.  Finally we apply the result to calculate the vacuum energy and spectral determinant of a complete bipartite graph and compare our results with those for a star graph, a graph in which all vertices are connected to a central vertex by a single edge.\\

\noindent\textbf{Keywords:} quantum graph, zeta function\\
\textbf{Mathematics Subject Classification:} 05C99, 81Q10, 81Q35}

\section{Introduction}

Zeta functions are widely studied in graph theory where, for example, the Ihara zeta function associated to a finite graph is defined by an Euler product over all backtrack-less primitive closed loops, see e.g. \cite{Has89, StaTer96, Sun86}.
In mathematical physics, on the other hand, quantum graphs provide an important model to investigate phenomena associated with complex quantum systems.  In a quantum graph, edges of the graph correspond to intervals with a differential operator, typically the Laplace or Schr\"{o}dinger operator, acting on functions on the intervals.  Quantum graphs are employed in diverse areas including Anderson localization, quantum chaos, nanotechnology, and the theory of photonic crystals; see \cite{BKbook} for an introduction. In many of these applications, it is the spectral properties of the graphs that are of particular interest.  A spectral zeta function is
\begin{equation*}
\sum_j\,\!^{^\prime} \lambda_j^{-s}
\end{equation*}
where $\{\lambda_j\}$ is the point spectrum of a self-adjoint operator and the prime indicates that zero eigenvalues are excluded.  For finite quantum graphs, such a zeta function can be written as a sum over periodic orbits using the trace formula.  Alternatively, it can be formulated in terms of the vertex conditions using a contour integral approach for the Laplace \cite{HarKir11}, Schr\"odinger \cite{HarKirTex12}, and Dirac \cite{HarWeyKir16} operators.  Spectral zeta functions of lattice and torus graphs have also been studied in the case of discrete graphs, where they are seen to inherit properties associated with the Riemann zeta function \cite{FriKar16}.

While the spectrum of the Laplacians of discrete graphs and quantum graphs appear quite different, there is a relation between them \cite{Below85, Kuc04, Pank06, BKbook}. In this paper, we use this to relate the spectral zeta function of the Laplace operator on an equilateral quantum graph (a graph where every edge has the same length) to the spectral zeta function of the normalized Laplace operator on a discrete graph.  This should be seen as a first step in connecting the literature on the zeta functions of discrete graphs with spectral properties of quantum graphs.

The article is organized as follows. In Section \ref{sec: background}, we define the discrete and quantum spectral zeta functions of the respective Laplace operators. We state and prove our main result relating the quantum spectral zeta function to the corresponding discrete spectral zeta function in Section \ref{sec: main}. In Section \ref{sec: applications}, we apply the result to compute the vacuum energy and spectral determinant of a complete bipartite graph. The results are compared to those for a star graph, a graph in which all vertices are connected to a central vertex by a single edge.

\section{Background}\label{sec: background}

A \emph{discrete graph} $G$ consists of a set of vertices $\mathcal{V}$ and a set of edges $\mathcal{E}$ that connect pairs of vertices, so an edge $e=(u,v)$ for $u,v\in \mathcal{V}$; see e.g. Figure \ref{fig: star}. In this paper, we consider finite discrete connected graphs that have a finite number of vertices and edges. We denote the number of vertices by $V = |\mathcal{V}|$ and the number of edges by $E = |\mathcal{E}|$. The \emph{first Betti number} of the graph is $\beta : = E - V + 1$, the number of independent cycles on $G$. The \emph{degree} of a vertex $v$, denoted $d_v$, is the number of edges connected to $v$.  A \emph{bipartite graph} is a graph where the vertex set can be split into two disjoint parts, $\mathcal{V}=\mathcal{U}\cup \mathcal{W}$ with $\mathcal{U}\cap \mathcal{W} = \emptyset$, such that every edge $e=(u,w)$ with $u\in \mathcal{U}$ and $w\in \mathcal{W}$.

Functions on a discrete graph $G$ take values at the vertices, and hence, are represented by vectors in $\C^{V}$. Operators that act on these functions can be represented as $V \times V$ matrices. For example, the \emph{normalized} (harmonic) \emph{Laplace operator} on a discrete graph is defined as
\begin{equation}
(\Delta f)(v) = f(v) - \frac{1}{d_v}\displaystyle\sum_{u\sim v}f(u),
\end{equation}
or alternatively, $\Delta$ is the $V \times V$ matrix whose entries are
\begin{equation}
\Delta_{u,v} = \left\{\begin{array}{ll}
1 & \mbox{if } u = v\\
-\frac{1}{d_v} & \mbox{if } u \sim v\\
0 & \mbox{otherwise}.
\end{array}\right.
\end{equation}
We denote the eigenvalues of $\Delta$ by $\lambda_1 \leq \lambda_2 \leq \cdots \leq \lambda_{V}$.

A \emph{metric graph} $\G$ is a discrete graph on which each edge $e\in \mathcal{E}$ is assigned a length $L_e$ and associated with the interval $[0,L_e]$; here we assume that every length $L_e$ is finite. The orientation of the coordinate $x_e$ is arbitrary; our results are independent of this choice. However, for clarity, given an edge $e =(u,v)$ connecting vertices $u$ and $v$ we fix the order of $u$ and $v$ and assign the orientation $x_e =0$ at $u$ and $x_e = L_e$ at $v$.  A function $f$ on a metric graph is defined by a collection of functions $\{f_e\}_{e\in \mathcal{E}}$, one on each interval. In this paper, we consider  \emph{equilateral metric graphs} where every edge has the same length $L$. Given a discrete graph $G$, we make a corresponding equilateral metric graph $\G$ by assigning length $L$ to each edge.

A \emph{quantum graph} is a metric graph equipped with a self-adjoint differential operator. Here we consider the standard \emph{Laplace operator}, which is defined as
 \begin{equation}
 \La f_e = -\frac{\ud^2f_e}{\ud {x_e}^2},
\end{equation}
together with the \emph{Neumann-Kirchhoff vertex conditions}
\begin{equation}\label{eq: NK vertex conditions}
\left\{ \begin{array}{l}
f \mbox{ is continuous at all vertices } v \in \mathcal{V} \mbox{ and}\\
\displaystyle \sum_{e\in \mathcal{E}_v} f_e'(v) = 0 \mbox{ at all vertices } v \in \mathcal{V}
\end{array}\right.
\end{equation}
where $\mathcal{E}_v$ is the set of all edges attached to vertex $v$.  When evaluating the derivative at a vertex, by convention we consider the derivative to be taken into the edge $e$ (away from the vertex). The second Sobolev space on an interval $[a,b]$ is the set of all functions such that the function, its first derivative, and its (weak) second derivative are all in $L^2([a,b])$. The second Sobolev space on $\G$ is then the direct sum of second Sobolev spaces on the intervals,
\begin{equation}
H^2(\G) = \bigoplus_{e \in \mathcal{E}} H^2([0,L]),
\end{equation}
and the domain of $\La$ is all functions in $H^2(\G)$ that satisfy the vertex conditions \eqref{eq: NK vertex conditions}. The Laplacian with these vertex conditions has real non-negative eigenvalues \cite{BKbook}, and hence, they can be written as $0 \leq k_1^2 \leq k_2^2 \leq \ldots$ where $k_j \in \R$. We also consider the self-adjoint operator $-\frac{\ud^2}{\ud {x_e}^2}$ with the \emph{Dirichlet vertex conditions} $f_e(v) = 0$ for all vertices $v \in \mathcal{V}$. The set of all eigenvalues of this operator is called the \emph{Dirichlet spectrum}.

The \emph{spectral zeta function} is a generalization of the Riemann zeta function where nonzero eigenvalues of an operator take on the   role of the integers. Let $Z(s)$ denote the spectral zeta function of the normalized Laplacian $\Delta$ on a discrete graph $G$ and $\Z(s)$ the spectral zeta function of the Laplacian $\La$ on the corresponding equilateral quantum graph $\G$. Then
\begin{equation}\label{eq: spectral zeta functions}
Z(s) = \sum_{j=1}^{V}\,\!^{^\prime} \lambda_j^{-s} \hspace{.5cm}\mbox{and}\hspace{.5cm} \Z(s) = \sum_{j=1}^\infty\,\!^{^\prime} k_j^{-2s}
\end{equation}
where the prime denotes that the sum is taken over nonzero eigenvalues. As written in \eqref{eq: spectral zeta functions}, the domain of $\Z(s)$ is $\Re(s) > \frac{1}{2}$; a domain that will be extended subsequently.

\section{Relation between the discrete and quantum spectral zeta functions}\label{sec: main}

In this paper, we prove the following relation between the spectral zeta functions of discrete and quantum graphs.

\begin{thm}\label{thm: main}
Suppose that $G$ is a discrete graph and let $\G$ be the corresponding equilateral metric graph where each edge has length $L$. Then, for $\Re(s) < 0$, the quantum spectral zeta function of $\La$ is
\begin{align}\label{eq:main}
\Z(s) &= \displaystyle \frac{2L^{2s}\Gamma(1-2s)}{\pi}\sin(s\pi)\sum_{n=1}^\infty\sum_{r=0}^n (-2)^r\frac{n^{2s}}{n+r}\binom{n+r}{2r}Z(-r)\nonumber\\
&\hspace{.5cm} + \left(4^s(\beta - 1) + 2\right) \left(\frac{L}{2\pi}\right)^{2s}\zeta_R(2s) \ .
\end{align}
$Z(s)$ is the discrete spectral zeta function of $\Delta$, $\zeta_R(z)$ is the Riemann zeta function, and $\beta = E - V + 1$ is the first Betti number.
\end{thm}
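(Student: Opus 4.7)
The key input is the classical spectral correspondence of von Below and collaborators (see \cite{Below85, Pank06, BKbook}): for $kL\notin \pi\mathbb{Z}$, the eigenvalue $k^2$ of $\mathcal{L}$ has multiplicity equal to the multiplicity of $\lambda = 1-\cos(kL)$ in $\sigma(\Delta)$, while the Dirichlet values $k=n\pi/L$ carry an additional geometric multiplicity determined by the cycle structure of $G$. My plan is to parametrise the non-Dirichlet part of the quantum spectrum through the discrete eigenvalues, express each block as a pair of Hurwitz zeta functions, apply Hurwitz's functional equation to pass to a Fourier series, and recognise the Fourier coefficients as values of the Chebyshev polynomials $T_n$ of the first kind evaluated at $1-\lambda_j$, which admit a polynomial expansion producing the binomial coefficients in \eqref{eq:main}. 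A final bookkeeping step reassembles the leftover Dirichlet eigenvalues into the Riemann zeta correction.

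Concretely, for each nonzero $\lambda_j\in\sigma(\Delta)$ set $\alpha_j = \arccos(1-\lambda_j)\in(0,\pi]$. The positive solutions of $\cos(kL)=1-\lambda_j$ contribute
\begin{equation*}
\mathcal{Z}_{\lambda_j}(s) = \left(\tfrac{L}{2\pi}\right)^{\!2s}\!\bigl[\zeta_H\bigl(2s,\tfrac{\alpha_j}{2\pi}\bigr)+\zeta_H\bigl(2s,1-\tfrac{\alpha_j}{2\pi}\bigr)\bigr]
\end{equation*}
to $\mathcal{Z}(s)$, where $\zeta_H(s,a)=\sum_{n\geq 0}(n+a)^{-s}$ is the Hurwitz zeta function. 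In the half-plane $\Re(s)<0$ Hurwitz's functional equation takes the absolutely convergent Fourier form
\begin{equation*}
\zeta_H(2s,a)+\zeta_H(2s,1-a) = \frac{4\,\Gamma(1-2s)\sin(\pi s)}{(2\pi)^{1-2s}}\sum_{n=1}^{\infty}\frac{\cos(2\pi n a)}{n^{1-2s}},
\end{equation*}
and setting $a=\alpha_j/(2\pi)$ turns the Fourier coefficient into $\cos(n\alpha_j)=T_n(1-\lambda_j)$. Combining this with the expansion $T_n(1-\lambda) = \sum_{r=0}^n \frac{n}{n+r}\binom{n+r}{2r}(-2\lambda)^r$ (verifiable by induction from the three-term recurrence for $T_n$) together with $Z(-r)=\sum_j{}' \lambda_j^r$ produces, after interchange of the finite $r$-sum with the absolutely convergent $n$-sum, exactly the double sum on the first line of \eqref{eq:main}, with prefactor $2L^{2s}\Gamma(1-2s)\sin(s\pi)/\pi$ arising from the coefficient collapse $(L/2\pi)^{2s}\cdot 4/(2\pi)^{1-2s}=2L^{2s}/\pi$.

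The second line of \eqref{eq:main} collects the Dirichlet eigenvalues $k=n\pi/L$. A direct eigenfunction construction, supporting Dirichlet eigenfunctions on cycles of $G$, yields multiplicity $\beta+1$ for $n$ even, and $\beta-1$ or $\beta+1$ for $n$ odd according as $G$ is non-bipartite or bipartite. Crucially, when $G$ is bipartite the boundary case $\alpha_j=\pi$ at $\lambda_j=2$ causes the two parametrisations $\pm\alpha_j+2\pi n$ to collapse onto a single odd-Dirichlet sequence, so the first line of \eqref{eq:main} already double-counts the odd-Dirichlet contribution from $\lambda_j=2$. Subtracting this overcount leaves residual Dirichlet multiplicities $\beta+1$ (even $n$) and $\beta-1$ (odd $n$) uniformly across the bipartite/non-bipartite dichotomy. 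Applying $\sum_{n\geq 1}(2\pi n/L)^{-2s}=(L/2\pi)^{2s}\zeta_R(2s)$ and $\sum_{n\geq 1}((2n-1)\pi/L)^{-2s}=(L/2\pi)^{2s}(4^s-1)\zeta_R(2s)$ and collecting terms collapses the residual into $\bigl(4^s(\beta-1)+2\bigr)(L/2\pi)^{2s}\zeta_R(2s)$, as required.

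The main obstacle will be the Dirichlet multiplicity count: identifying the $\beta$ independent cycle eigenfunctions together with the contributions induced by $\lambda\in\{0,2\}\subset\sigma(\Delta)$, and verifying that the bipartite overcount from the first line cancels the extra odd-$n$ multiplicity uniformly, so that a single closed form suffices for both graph classes. A secondary, essentially routine point is justifying the interchange of summations in the middle paragraph, which follows from absolute convergence of the Fourier series for $\Re(s)<0$ together with the finiteness of the polynomial expansion of $T_n(1-\lambda)$ for each fixed $n$.
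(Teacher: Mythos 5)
Your proposal is correct and follows essentially the same route as the paper: the same splitting of the spectrum into the branches $\frac{2n\pi}{L}\pm\frac{\alpha_j}{L}$ handled via paired Hurwitz zeta functions and their functional equation in the half-plane $\Re(s)<0$, the same Chebyshev identity $\cos(n\alpha_j)=T_n(1-\lambda_j)$ expanded in powers of $\lambda_j$ to produce the values $Z(-r)$, and the same residual Dirichlet multiplicities $\beta+1$ (even $n$) and $\beta-1$ (odd $n$) once the $\lambda_j=2$ contribution to the odd Dirichlet points is absorbed into the Hurwitz part. One small caution: no interchange of the $r$-sum with the infinite $n$-sum is needed, nor is it permitted (the paper explicitly notes the $n$-sum is not absolutely convergent after the Chebyshev substitution); one only substitutes the finite expansion of $T_n$ inside each fixed $n$, which is exactly how the stated order $\sum_{n=1}^{\infty}\sum_{r=0}^{n}$ in \eqref{eq:main} arises.
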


\noindent Note that, in equation \eqref{eq:main}, $\Gamma(\cdot)$ is the gamma function and not the metric graph.

To obtain this relation, we rely on the following theorem from \cite{Below85} (also see \cite{Kuc04,Pank06,BKbook}) which relates the eigenvalues of the normalized Laplace operator acting on a discrete graph to the eigenvalues of the Laplace operator acting on the corresponding equilateral metric graph.
\begin{thm}\label{thm: equivalence}
Suppose that $G$ is a discrete graph and let $\G$ be the corresponding equilateral metric graph where each edge has length $L$. If $ k^2$ is not in the Dirichlet spectrum of the Laplace operator $\La$ acting on $\G$, then
\begin{equation}
k^2 \in \sigma(\La) \iff  1-\cos(kL) \in \sigma(\Delta)
\end{equation}
where $\Delta$ is the normalized Laplace operator acting on $G$ and $\sigma(\cdot)$ denotes the spectrum of the operator.
\end{thm}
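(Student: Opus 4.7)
The plan is to prove both directions of the equivalence by an explicit change of variables between edgewise eigenfunctions of $\La$ and vertex-valued eigenfunctions of $\Delta$. For the forward direction, I would take $f$ to be an eigenfunction of $\La$ with eigenvalue $k^2$. On each edge $e=(u,v)$, the equation $-f_e''=k^2 f_e$ has general solution spanned by $\cos(kx)$ and $\sin(kx)$. Continuity at the vertices lets me impose $f_e(0)=f(u)$ and $f_e(L)=f(v)$, and the assumption that $k^2$ is not in the Dirichlet spectrum (for $k\neq 0$) is exactly what rules out $\sin(kL)=0$, since the Dirichlet spectrum is $\{(n\pi/L)^2\}_{n\geq 1}$. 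I can then uniquely solve for $f_e$ in the form
\begin{equation*}
f_e(x) \;=\; f(u)\,\frac{\sin(k(L-x))}{\sin(kL)} \;+\; f(v)\,\frac{\sin(kx)}{\sin(kL)}.
\end{equation*}

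Next, I would differentiate this and evaluate at $x=0$ (with the derivative pointing into the edge, per the paper's convention) to obtain $f_e'(u) = \frac{k}{\sin(kL)}\bigl(f(v) - f(u)\cos(kL)\bigr)$. Summing over all edges at $u$ and enforcing the Neumann--Kirchhoff condition $\sum_{e\in \mathcal{E}_u} f_e'(u)=0$ yields, after dividing by the nonzero factor $k/\sin(kL)$, the identity $\sum_{v\sim u} f(v) = d_u\,\cos(kL)\,f(u)$. Rearranging and dividing by $d_u$ gives $(\Delta f)(u) = (1-\cos(kL))\,f(u)$ at every vertex. I then need to verify that the vertex restriction of $f$ is not identically zero; but if it were, the closed-form expression above would force $f_e\equiv 0$ on every edge, contradicting $f\neq 0$.

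For the converse, starting from $\tilde f$ an eigenfunction of $\Delta$ with eigenvalue $1-\cos(kL)$, I would \emph{define} $f$ on $\G$ edgewise by the same closed form (with $\tilde f$ substituted for $f$ at the vertices). By construction $-f_e''=k^2 f_e$; continuity at every vertex $u$ is automatic because each incident edge formula evaluates to $\tilde f(u)$ there; and the Kirchhoff sum reduces, by the same derivative computation run in reverse, to the discrete eigenvalue equation, which holds by hypothesis. Nonvanishing of $f$ follows from nonvanishing of $\tilde f$, so $k^2\in\sigma(\La)$.

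The only delicate step is the degenerate case $k=0$, where $\sin(kL)=0$ even though $k^2=0$ is not in the Dirichlet spectrum. Here $1-\cos(kL)=0$, and both operators carry $0$ as an eigenvalue with constant eigenfunctions on a connected graph, so the equivalence holds trivially (the formulas above extend to this case as the limit $k\to 0$, with the cardinal-sine factors becoming affine interpolants). The main obstacle is really bookkeeping: making precise that the only obstruction to inverting the parametrization is the vanishing of $\sin(kL)$, and that (for $k\neq 0$) this happens exactly on the Dirichlet spectrum, which is precisely the hypothesis excluded in the theorem.
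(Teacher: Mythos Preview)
Your argument is correct and is essentially the standard proof of this result. Note, however, that the paper does not actually prove Theorem~\ref{thm: equivalence}: it is quoted from the literature (von Below~\cite{Below85}, with further references to~\cite{Kuc04,Pank06,BKbook}) and used as a black box in the proof of Theorem~\ref{thm: main}. So there is no ``paper's own proof'' to compare against.

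That said, what you have written is precisely the argument one finds in those references: parametrize the edge solution by its two endpoint values via the sine-kernel interpolant, observe that this parametrization is well-defined exactly when $\sin(kL)\neq 0$ (i.e., away from the Dirichlet spectrum for $k\neq 0$), and then read off that the Kirchhoff condition at each vertex is equivalent to the eigenvalue equation for the normalized Laplacian on the vertex values. Your handling of the degenerate case $k=0$ and of the nontriviality of the vertex restriction is also standard and correct. The only cosmetic point is that in the converse direction you should make explicit that for a given discrete eigenvalue $\mu\in[0,2]$ one chooses $k$ with $1-\cos(kL)=\mu$; the paper later uses exactly this parametrization when it writes each $\lambda_j$ as $1-\cos(k_jL)$ with $k_j\in[0,\pi/L]$.
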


\subsection{Dirichlet eigenvalues}

Enforcing the Dirichlet vertex conditions on a quantum graph is equivalent to breaking the graph at each vertex, forming $E$ disconnected intervals. Therefore, the Dirichlet eigenvalues of the quantum graph are the eigenvalues of the differential equation
\begin{equation}
-f''(x) = \lambda f(x), \hspace{.5cm} f(0) = f(L) = 0,
\end{equation}
which are $\left(\frac{n\pi}{L}\right)^2$, $n \in \N$.

We need to determine which of the Dirichlet eigenvalues are also eigenvalues of $\La$.

\begin{lemma}\label{lemma: Dirichlet}
The multiplicity of the eigenvalue $\left(\frac{n\pi}{L}\right)^2$, $n \in \N$, in the spectrum of $\La$ for an equilateral quantum graph is,
\begin{itemize}
\item[(i)] $(\beta-1)+2\dim \ker (\Delta)$ when $n$ is even, and
\item[(ii)] $(\beta-1)+2\dim \ker (\Delta-2\UI)$ when $n$ is odd,
\end{itemize}
where $\Delta$ is the normalized Laplace operator acting on the corresponding discrete graph $G$.
\end{lemma}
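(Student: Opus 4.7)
The plan is to parametrize every eigenfunction of $\La$ with eigenvalue $k^2=(n\pi/L)^2$ by the edge-wise coefficients in $f_e(x)=\alpha_e\sin(kx)+\beta_e\cos(kx)$, exploit the Dirichlet identities $\sin(kL)=0$ and $\cos(kL)=(-1)^n$ to decouple the vertex conditions into independent constraints on $\phi(v):=f(v)$ and on $\{\alpha_e\}$, and count the resulting dimension.

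Continuity at the vertices forces $\beta_e=\phi(u_e)$ together with the edge-wise compatibility
\begin{equation*}
\phi(w_e)=(-1)^n\phi(u_e)
\end{equation*}
for every oriented edge $e=(u_e,w_e)$. Because $\sin(kL)=0$, the derivatives $f_e'(0)=\alpha_e k$ and $f_e'(L)=(-1)^n\alpha_e k$ involve only $\alpha_e$, so the Neumann-Kirchhoff sum at vertex $v$ is a linear condition purely on $\{\alpha_e\}$: signed-flow conservation $N\alpha=0$ when $n$ is even (with $N$ the signed incidence matrix), and $B\alpha=0$ when $n$ is odd (with $B$ the unsigned incidence matrix). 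Since the $\phi$-constraints and the $\alpha$-constraints are disjoint, the eigenspace decomposes as a direct sum and its dimension equals $\dim(\phi\text{-space})+\dim(\alpha\text{-space})$.

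To finish, I would identify each piece with the kernel of a standard discrete operator. The $\phi$-condition is $L\phi=0$ for $n$ even (combinatorial Laplacian $L=D-A$) and $Q\phi=0$ for $n$ odd (signless Laplacian $Q=D+A$); for the latter I use the identity $\phi^*Q\phi=\sum_e|\phi(u_e)+\phi(w_e)|^2$. Because $\Delta=D^{-1}L$ and $\Delta-2\UI=-D^{-1}Q$ with $D$ invertible, these are exactly $\ker\Delta$ and $\ker(\Delta-2\UI)$. For the $\alpha$-space, the identities $NN^T=L$ and $BB^T=Q$, together with the fact that a real matrix and its Gram matrix have equal rank, yield via rank-nullity
\begin{equation*}
\dim\ker N=\beta-1+\dim\ker\Delta,\qquad\dim\ker B=\beta-1+\dim\ker(\Delta-2\UI).
\end{equation*}

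Adding the two contributions in each parity gives the claimed multiplicity $(\beta-1)+2\dim\ker M$, with $M=\Delta$ for $n$ even and $M=\Delta-2\UI$ for $n$ odd. The main technical point is the clean decoupling of $\phi$ from $\alpha$: the Dirichlet identities make $\beta_e$ invisible to the derivative condition and $\alpha_e$ invisible to the value condition, which is exactly what allows the two sources of multiplicity to add rather than be further coupled by compatibility constraints. A secondary subtlety is translating the $\alpha$-side from the incidence matrices $N$ and $B$ back to the normalized Laplacian $\Delta$; this is handled through the Gram-matrix identities above, which is why the signless Laplacian (and hence the parity-dependent kernel $\ker(\Delta-2\UI)$) naturally appears in the odd case.
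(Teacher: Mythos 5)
Your proposal is correct and follows essentially the same route as the paper: decompose each edge function into its cosine and sine parts, observe that continuity constrains only the cosine coefficients while the Neumann--Kirchhoff condition constrains only the sine coefficients, and count the two kernels via the signed/unsigned incidence matrices and the Gram identities $\Delta = D^{-1}QQ^{T}$ and $\Delta - 2\UI = -D^{-1}MM^{T}$. The only differences are notational (the paper writes the value-space count as the number of connected, respectively connected bipartite, components rather than as $\ker(D-A)$ and $\ker(D+A)$), so there is nothing substantive to add.
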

While we state the lemma in the form that will be most useful subsequently, $2$ is an eigenvalue of $\Delta$ if and only if the graph is bipartite \cite{Chungbook}.  Consequently, for a connected graph, a straightforward corollary is,
 \begin{cor}\label{cor: Dirichlet}
The multiplicity of the eigenvalue $\left(\frac{n\pi}{L}\right)^2$, $n \in \N$, in the spectrum of $\La$ for a connected equilateral quantum graph is
\begin{itemize}
\item[(i)] $\beta+1$ when $n$ is even.
\item[(ii)] either $\beta+1$ if the graph is bipartite or $\beta-1$ otherwise, when $n$ is odd.
\end{itemize}
\end{cor}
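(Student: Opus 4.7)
The plan is to deduce the corollary as a direct specialization of Lemma 1 using two standard facts about the normalized Laplacian on a connected graph. First I would identify the kernels appearing in parts (i) and (ii) of Lemma 1 under the hypothesis of connectedness, and then substitute.

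For part (i), I would note that $\Delta$ is similar via conjugation by $D^{1/2}$ (where $D$ is the diagonal degree matrix) to the symmetric normalized Laplacian, so the multiplicity of $0$ in the spectrum of $\Delta$ equals the number of connected components. Since $G$ is connected, $\dim\ker(\Delta) = 1$, with the constant function spanning the kernel. Substituting into Lemma 1(i) immediately yields multiplicity $(\beta-1) + 2\cdot 1 = \beta+1$ for $n$ even.

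For part (ii), I would use the cited fact from \cite{Chungbook} that $2 \in \sigma(\Delta)$ if and only if $G$ is bipartite. If $G$ is not bipartite, then $\dim\ker(\Delta-2\UI)=0$ and Lemma 1(ii) gives multiplicity $\beta-1$. If $G$ is bipartite and connected, with bipartition $\mathcal{V}=\mathcal{U}\cup\mathcal{W}$, one checks that the function taking value $+1$ on $\mathcal{U}$ and $-1$ on $\mathcal{W}$ lies in $\ker(\Delta-2\UI)$, so $\dim\ker(\Delta-2\UI)\geq 1$. I would then argue that this dimension is exactly $1$: any eigenvector $f$ for eigenvalue $2$ satisfies $f(v) = -\frac{1}{d_v}\sum_{u\sim v} f(u)$, which forces $f(u) = -f(v)$ for every edge $(u,v)$ (by applying the Cauchy--Schwarz / equality-of-averages argument to $\sum_v d_v |f(v)|^2$ versus $\sum_{(u,v)\in\mathcal{E}} (f(u)+f(v))^2 = 0$), and connectedness then determines $f$ up to a single scalar by propagating signs along a spanning tree. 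Substituting $\dim\ker(\Delta-2\UI)=1$ into Lemma 1(ii) gives multiplicity $\beta+1$ in the bipartite case.

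The whole argument is essentially a substitution, so there is no serious obstacle; the only point requiring a short verification is the simplicity of the eigenvalue $2$ for a connected bipartite graph, which reduces to the sign-propagation argument sketched above.
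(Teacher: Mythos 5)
Your proposal is correct and follows essentially the same route as the paper, which presents the corollary as an immediate specialization of Lemma~\ref{lemma: Dirichlet} using that $\dim\ker(\Delta)=1$ for a connected graph and that $2\in\sigma(\Delta)$ (with $\dim\ker(\Delta-2\UI)$ equal to the number of connected bipartite components, hence $1$ or $0$) precisely when the graph is bipartite. Your sign-propagation verification of the simplicity of the eigenvalue $2$ is a correct filling-in of a detail the paper leaves to the cited reference.
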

Note that, if $\beta=0$ the graph is a tree, which is bipartite.
In the case of quantum graphs where the edge lengths are incommensurate, the spectrum is often studied via a secular equation whose roots are the square roots of the eigenvalues up to multiplicity.  For an equilateral graph the multiplicity of $n\pi/L$ as a root of the secular equation for even $n$ is the same as the multiplicity of zero as a root of the secular equation.  In \cite{FulKucWil07} Fulling, Kuchment and Wilson obtain this algebraic multiplicity of zero as a root of the secular equation for graphs with general vertex conditions, Corollary 23.  In particular, with Neumann-Kirchoff vertex conditions they show this multiplicity is $2-V+E=\beta +1$ in agreement with the corollary.  We will present a direct proof of the whole lemma. 
\begin{proof}
For an eigenvalue $\left(\frac{n\pi}{L}\right)^2$ of $\La$,
the eigenfunction on each edge $e$ has the form
\begin{equation}
f_e(x_e) = a_e \cos \left(\frac{n\pi}{L} x_e \right) +b_e \sin \left(\frac{n\pi}{L} x_e \right) \ .
\end{equation}

\emph{Part (i): n even:} We first deal with the case when $n$ is even.  Then $f_e(0)=f_e(L)=a_e$ for all $e$, and the continuity of $f$ on the connected graph requires $a_e=a$ for some constant $a$ on a connected component. The number of connected components is $\dim \ker (\Delta)$.

Furthermore, $f'_e(0)=f'_e(L)=\left(\frac{n\pi}{L}\right) b_e$, and hence, the conditions
\begin{equation}\label{eq:derivative}
\sum_{e\in \mathcal{E}_v} f_e'(v) = 0
\end{equation}
at each vertex $v \in \mathcal{V}$ are $V$ linear conditions on the vector $\mathbf{b}=(b_1,\dots,b_E)$ of coefficients of the sine functions. 
We can write these conditions in matrix form $Q \mathbf{b} = \mathbf{0}$ where $Q_{ve}=1$ if $x_e=0$ at $v$, $Q_{ve}=-1$ if $x_e=L_e$ at $v$, and $Q_{ve}=0$ otherwise.
To determine the dimension of $\ker (Q)$, note that,
\begin{equation}
\dim \ker (Q) - \dim \ker (Q^T) = E - V =\beta -1
\end{equation}
where $\beta$ is the first Betti number of the graph, the number of independent cycles.
If we define a diagonal matrix of the vertex degrees $D=\textrm{diag} \{d_1,\dots,d_V \}$ then,
\begin{equation}
\Delta=D^{-1} (QQ^T) \ .
\end{equation} 
The kernel of $\Delta$ is the kernel of $Q^T$ and hence,
\begin{equation}
\dim \ker (Q) = (\beta-1) + \dim \ker (\Delta) \ .
\end{equation}

\emph{Part (ii): n odd:} In the case where $n$ is odd, $f_e(0)=a_e$ and $f_e(L)=-a_e$.  Hence the continuity of $f$ on the connected graph requires that we choose $a_e=a$ or $a_e=-a$ for some constant $a$ so that the sign of the solution alternates at adjacent vertices.  This is possible if and only if the graph contains no cycles with an odd number of edges.  A graph where there are no cycles with an odd number of edges is bipartite \cite{Bipartitebook}.  Also, $2$ is an eigenvalue of $\Delta$ if and only if $G$ is bipartite \cite{Chungbook}.  Hence, the dimension of the subspace of solutions spanned by the cosine functions is the dimension of 
$\ker (\Delta -2\UI)$, the number of connected bipartite components of the graph.

Similarly, $f'_e(0)=-f'_e(L)=\left(\frac{n\pi}{L}\right) b_e$.
As in the even $n$ case, the vertex conditions (\ref{eq:derivative}) are $V$ linear conditions on the vector $\mathbf{b}=(b_1,\dots,b_E)$ of coefficients of the sine functions.  Writing the conditions in matrix form,
$M \mathbf{b} = \mathbf{0} $
where $M_{ve}=1$ if the edge $e$ is connected to $v$ and $M_{ve}=0$ otherwise.  
\begin{equation}
\dim \ker (M) - \dim \ker (M^T) = E - V =\beta -1
\end{equation}
In this case,
\begin{equation}
\Delta=2\UI - D^{-1} (MM^T) 
\end{equation} 
and $\ker (M^T) = \ker (\Delta -2\UI)$.  Hence 
\begin{equation}
\dim \ker (M) = (\beta -1) + \dim \ker (\Delta-2\UI) \ .
\end{equation}
Then combining the eigenfunctions spanned by the cosine and sine functions respectively provides the result.
\end{proof}

\subsection{Proof of Theorem \ref{thm: main}}

 First we will analyze $Z(s)$, the discrete spectral zeta function. Since all eigenvalues of $\Delta$ lie in the closed interval $[0,2]$ \cite{Chungbook}, we know that every eigenvalue can be written as $1-\cos(kL)$ for some $k \in \left[0,\frac{\pi}{L}\right]$. We define $K = \{k_j\}_{j=1}^V$ to be the set $0 = k_1 < k_2 \leq \ldots \leq k_V \leq \frac{\pi}{L}$ such that $1-\cos(k_jL) \in \sigma(\Delta)$ for each $k_j$. Each $k_j$ need not be distinct; in fact, if $1-\cos(kL) \in \sigma(\Delta)$ has multiplicity $n$, we require that $k$ appear in $K$ $n$ times. We know that $k_1 = 0$ and $k_1 < k_2$ because zero is an eigenvalue of $\Delta$ of multiplicity one for every connected discrete graph $G$.  We can write the discrete spectral zeta function as
\begin{equation}\label{eq: discrete spectral zeta}
Z(s) = \sum_{j=2}^{V} (1-\cos(k_jL))^{-s}
\end{equation}
where the sum begins at $j=2$ to avoid the zero eigenvalue of $\Delta$.

Now we will relate the eigenvalues of $\Delta$ to the eigenvalues of $\La$. If $k \neq 0,\frac{\pi}{L}$ and $1 - \cos(kL) \in \sigma(\Delta)$, then by Theorem \ref{thm: equivalence} so is $1 - \cos(2n\pi + kL)$, and hence $\left(\frac{2n\pi}{L} + k\right)^2 \in \sigma(\La)$ for all $n \in \N$. Similarly, $\left(\frac{2n\pi}{L} - k\right)^2$ is also in $\sigma(\La)$. For $k_1 = 0$, we know from Lemma \ref{lemma: Dirichlet} that the multiplicity of $\left(\frac{2n\pi}{L}\right)^2$ is $(\beta - 1) + 2$. Similarly, we also know from Lemma \ref{lemma: Dirichlet} that the multiplicity of $\left(\frac{(2n+1)\pi}{L}\right)^2$ is $(\beta -1) + 2\dim \ker (\Delta - 2\UI)$. Observing that $\dim \ker (\Delta-2\UI) = \left|\left\{k_j \in K: k_j = \frac{\pi}{L}\right\}\right|$, we can write the quantum spectral zeta function of $\La$ as
\begin{align}
\Z(s) &= (\beta - 1)\sum_{n=1}^\infty \left(\frac{n\pi}{L}\right)^{-2s} + 2\sum_{n=1}^\infty \left(\frac{2n\pi}{L}\right)^{-2s}\nonumber\\
&\hspace{1cm} + \sum_{j=2}^{V} \left(\sum_{n=0}^\infty \left(\frac{2n\pi}{L} + k_j\right)^{-2s} + \sum_{n=1}^\infty \left(\frac{2n\pi}{L} - k_j\right)^{-2s}\right)
\end{align}
since this is a rearrangement of a series that converges absolutely for $\Re(s) > \frac{1}{2}$.  Hence,
\begin{equation}\label{eq: qzf unsimplified}
\Z(s)= (4^s(\beta-1) + 2)\left(\frac{L}{2\pi}\right)^{2s}\zeta_R(2s) + \sum_{j=2}^{V} \left(\sum_{n=0}^\infty \left(\frac{2n\pi}{L} + k_j\right)^{-2s} + \sum_{n=1}^\infty \left(\frac{2n\pi}{L} - k_j\right)^{-2s}\right)
\end{equation}
where $\zeta_R(z)$ is the Riemann zeta function.

\begin{lemma}\label{lemma: Hurwitz}
For $\Re(s) < 0$ and $k_j \in \left.\left(0,\frac{\pi}{L}\right.\right]$, 
\begin{align}
&\sum_{j=2}^{V} \left(\sum_{n=0}^\infty \left(\frac{2n\pi}{L} + k_j\right)^{-2s} + \sum_{n=1}^\infty \left(\frac{2n\pi}{L} - k_j\right)^{-2s}\right)\nonumber\\
&\hspace{.5cm} = \frac{2L^{2s}\Gamma(1-2s)}{\pi}\sin(s\pi)\sum_{n=1}^\infty n^{2s-1}\sum_{j=2}^{V} \cos(k_jLn).
\end{align}
\end{lemma}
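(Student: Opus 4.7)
The plan is to recognize the left-hand side as a combination of Hurwitz zeta functions $\zeta(s,a)=\sum_{n=0}^\infty(n+a)^{-s}$, and then apply Hurwitz's classical functional equation to re-expand them as the Fourier-type cosine series on the right. The Hurwitz zeta both provides the meromorphic continuation required to make sense of the divergent-looking inner sums for $\Re(s)<0$ and, via its functional equation, produces exactly the $\cos(k_jLn)$ coefficients appearing on the right.

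First I would factor $(2\pi/L)^{-2s}$ out of each summand and reindex. The first inner sum becomes $(L/2\pi)^{2s}\,\zeta\bigl(2s,k_jL/(2\pi)\bigr)$; for the second, substituting $n\mapsto n+1$ rewrites $(2n\pi/L-k_j)$ as $(2\pi/L)\bigl(n+(1-k_jL/(2\pi))\bigr)$ for $n\ge 0$, giving $(L/2\pi)^{2s}\,\zeta\bigl(2s,1-k_jL/(2\pi)\bigr)$. These identifications hold literally for $\Re(s)>1/2$ and extend to all $s$ by analytic continuation via the Hurwitz zeta. With $a:=k_jL/(2\pi)\in(0,1/2]$, the inner expression becomes
\begin{equation*}
\left(\frac{L}{2\pi}\right)^{2s}\bigl[\zeta(2s,a)+\zeta(2s,1-a)\bigr].
\end{equation*}

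Next I would invoke Hurwitz's formula, valid for $0<a\le 1$ and $\Re(s)<0$,
\begin{equation*}
\zeta(s,a)=\frac{2\Gamma(1-s)}{(2\pi)^{1-s}}\left[\sin\!\tfrac{\pi s}{2}\sum_{n=1}^\infty\frac{\cos(2\pi n a)}{n^{1-s}}+\cos\!\tfrac{\pi s}{2}\sum_{n=1}^\infty\frac{\sin(2\pi n a)}{n^{1-s}}\right],
\end{equation*}
with $s\mapsto 2s$. Using the parities $\cos(2\pi n(1-a))=\cos(2\pi n a)$ and $\sin(2\pi n(1-a))=-\sin(2\pi n a)$, the sine series cancels on adding the $a$ and $1-a$ contributions, while the cosine series doubles and the angle-doubling identity yields an overall factor $\sin(\pi s)$. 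Substituting $a=k_jL/(2\pi)$ turns $\cos(2\pi n a)$ into $\cos(n k_j L)$, and the prefactors collapse via $(L/2\pi)^{2s}\cdot 4/(2\pi)^{1-2s}=2L^{2s}/\pi$. Summing over $j=2,\dots,V$ and interchanging the finite $j$-sum with the (for $\Re(s)<0$ absolutely convergent) $n$-sum gives the claimed identity.

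The computation is largely bookkeeping, so the substantive step is the appeal to Hurwitz's formula together with checking it applies on the full range $k_j\in(0,\pi/L]$. The only delicate case is $k_j=\pi/L$, where $a=1-a=1/2$ makes the two Hurwitz zetas coincide and the series in question becomes an alternating one; here Hurwitz's formula still applies at $a=1/2$ and reduces to the standard functional equation for $\zeta_R$, so no separate argument is needed. Thus the lemma follows once one has justified the initial rearrangement by analytic continuation from $\Re(s)>1/2$ down to $\Re(s)<0$.
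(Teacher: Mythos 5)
Your proposal is correct and follows essentially the same route as the paper: identify the inner sums with $\zeta_H(2s,a)+\zeta_H(2s,1-a)$ for $a=k_jL/(2\pi)$, apply Hurwitz's formula for $\Re(s)<0$ and $0<a\le 1$, and observe that the sine series cancel while the cosine series double, yielding the stated prefactor. Your additional remarks on the endpoint $k_j=\pi/L$ and on justifying the initial identification by analytic continuation are sound and slightly more careful than the paper's own presentation.
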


\begin{proof}
We can see by rearranging the left-hand side that
\begin{align}
 & \displaystyle\sum _{j=2}^{V}\left(\sum_{n=0}^\infty \left(\frac{2n\pi}{L} + k_j\right)^{-2s} + \sum_{n=1}^\infty \left(\frac{2n\pi}{L} - k_j\right)^{-2s}\right)\\
       &\hspace{.5cm}= \displaystyle \left(\frac{L}{2\pi}\right)^{2s}\sum _{j=2}^{V}\left(\sum_{n=0}^\infty \left(n + \frac{k_jL}{2\pi}\right)^{-2s} + \sum_{l=0}^\infty \left(l + \left(1 - \frac{k_jL}{2\pi}\right)\right)^{-2s}\right)\\
       &\hspace{.5cm}= \displaystyle \left(\frac{L}{2\pi}\right)^{2s}\sum_{j=2}^{V} \left(\zeta_H\left(2s,\frac{k_jL}{2\pi}\right) + \zeta_H\left(2s,1-\frac{k_jL}{2\pi}\right)\right)\label{eq: Hurwitz}
\end{align}
where $\zeta_H(z,a)$ is the Hurwitz zeta function. The Hurwitz zeta function is convergent for $\Re(z) > 1$ and $a > 0$. The second sum was rewritten to ensure that $a = 1-\frac{k_jL}{2\pi}$ is positive, and then \eqref{eq: Hurwitz} is analytic for $s \neq 1/2$ \cite{NITS}.  In the restricted domain $\Re(z) < 0$ and $0 < a \leq 1$,
\begin{equation}
\zeta_H(z,a) = \displaystyle\frac{2\Gamma(1-z)}{(2\pi)^{1-z}}\left[\sin\left(\frac{z\pi}{2}\right)\sum_{n=1}^\infty \frac{\cos(2\pi na)}{n^{1-z}} + \cos\left(\frac{z\pi}{2}\right)\sum_{n=1}^\infty \frac{\sin(2\pi na)}{n^{1-z}}\right].
\end{equation}
 Therefore
\begin{equation}
  \zeta_H\left(2s,\frac{k_jL}{2\pi}\right) = \displaystyle\frac{2\Gamma(1-2s)}{(2\pi)^{1-2s}}\left[\sin\left(s\pi\right)\sum_{n=1}^\infty \frac{\cos(k_jLn)}{n^{1-2s}} + \cos\left(s\pi\right)\sum_{n=1}^\infty \frac{\sin(k_jLn)}{n^{1-2s}}\right]
\end{equation}
and
\begin{equation}
   \zeta_H\left(2s,1-\frac{k_jL}{2\pi}\right) = \frac{2\Gamma(1-2s)}{(2\pi)^{1-2s}}\left[\sin\left(s\pi\right)\sum_{n=1}^\infty \frac{\cos(k_jLn)}{n^{1-2s}} - \cos\left(s\pi\right)\sum_{n=1}^\infty \frac{\sin(k_jLn)}{n^{1-2s}}\right].
\end{equation}
Combining these equations, we see that
\begin{equation}\label{eq: sum of Hurwitz}
 \zeta_H\left(2s,\frac{k_jL}{2\pi}\right)+\zeta_H\left(2s,1-\frac{k_jL}{2\pi}\right) = \frac{2\Gamma(1-2s)(2\pi)^{2s}}{\pi}\sin(s\pi)\sum_{n=1}^\infty n^{2s-1}\cos(k_jLn)
\end{equation}
for $\Re(s) < 0$. Substituting \eqref{eq: sum of Hurwitz} into \eqref{eq: Hurwitz} and rearranging the absolutely convergent sum completes the proof.
\end{proof}

Finally, we can write $\displaystyle\sum_{j=2}^{V} \cos(k_jLn)$ in terms of $Z(s)$, the spectral zeta function on the corresponding discrete graph.

\begin{lemma}\label{lemma: cheb}
Let $K=\{k_j \}_{j=1}^V$ be as defined previously, so $1-\cos (k_j L) \in \sigma(\Delta)$. Then
\begin{equation}
\sum_{j=2}^{V} \cos(k_jLn) = \sum_{r=0}^n (-2)^r\frac{n}{n+r}\binom{n+r}{2r}Z(-r)
\end{equation}
where $Z(s)$ is the discrete spectral zeta function.
\end{lemma}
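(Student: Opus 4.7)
The plan is to reduce the claim to a classical polynomial identity for Chebyshev polynomials of the first kind, then prove that identity via a generating function argument.

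The starting observation is that from the definition $Z(s) = \sum_{j=2}^V (1-\cos(k_j L))^{-s}$ preceding the lemma, one has $Z(-r) = \sum_{j=2}^V (1-\cos(k_j L))^r$ for any $r \in \N$. Substituting this into the right-hand side of the lemma and swapping the two finite sums reduces the claim to establishing, for each $j \in \{2,\dots,V\}$, the trigonometric identity
\[
\cos(n k_j L) \;=\; \sum_{r=0}^n (-2)^r \frac{n}{n+r}\binom{n+r}{2r}\bigl(1-\cos(k_j L)\bigr)^r.
\]
Writing $x = \cos(k_j L)$ and recalling the definition of the Chebyshev polynomial $T_n$ of the first kind via $T_n(\cos\theta)=\cos(n\theta)$, this is equivalent to the polynomial identity
\[
T_n(x) \;=\; \sum_{r=0}^n (-2)^r \frac{n}{n+r}\binom{n+r}{2r}(1-x)^r.
\]

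To prove this polynomial identity I would use the standard generating function $\sum_{n\geq 0} T_n(x)\, t^n = (1-xt)/(1-2xt+t^2)$. Setting $u = 1-x$ turns the denominator into $(1-t)^2 + 2ut$, and after splitting the numerator as $(1-t)+ut$ and expanding the geometric series in $2ut/(1-t)^2$, one obtains
\[
\sum_{n\geq 0} T_n(x)\, t^n \;=\; \sum_{r\geq 0} \frac{(-2u)^r\, t^r}{(1-t)^{2r+1}} \;-\; \tfrac{1}{2}\sum_{r\geq 1} \frac{(-2u)^r\, t^r}{(1-t)^{2r}}.
\]
Extracting the coefficient of $t^n$ using the standard identities $[t^n]\,t^r(1-t)^{-(2r+1)} = \binom{n+r}{2r}$ and $[t^n]\,t^r(1-t)^{-2r} = \binom{n+r-1}{2r-1}$ then yields the claimed formula, once one verifies the elementary binomial identity $\binom{n+r}{2r} - \tfrac{1}{2}\binom{n+r-1}{2r-1} = \frac{n}{n+r}\binom{n+r}{2r}$, which follows from $\frac{r}{n+r}\binom{n+r}{2r} = \tfrac{1}{2}\binom{n+r-1}{2r-1}$.

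The main obstacle is the bookkeeping in the generating function manipulation and keeping careful track of indices when extracting coefficients; no conceptual difficulty lies beyond recognizing that the identity is really a statement about $T_n$. An equally workable alternative would be induction on $n$ using the Chebyshev recurrence $T_{n+1}(x) + T_{n-1}(x) = 2x T_n(x)$: with $b_{n,r} := \frac{n}{n+r}\binom{n+r}{2r}$ this translates into the three-term coefficient recurrence $b_{n+1,r} - 2b_{n,r} + b_{n-1,r} = b_{n,r-1}$, verifiable from the factorial expression for $b_{n,r}$.
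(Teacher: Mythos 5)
Your proposal is correct and follows essentially the same route as the paper: both reduce the claim to the expansion $T_n(x) = \sum_{r=0}^n (-2)^r\frac{n}{n+r}\binom{n+r}{2r}(1-x)^r$ together with $\cos(nk_jL)=T_n(\cos(k_jL))$ and the identity $Z(-r)=\sum_{j=2}^V(1-\cos(k_jL))^r$. The only difference is that the paper simply cites this Chebyshev expansion from the NIST handbook, whereas you derive it from the generating function $\sum_{n\ge 0}T_n(x)t^n=(1-xt)/(1-2xt+t^2)$; your derivation and the binomial bookkeeping check out.
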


\begin{proof}
One property of Chebyshev polynomials is the following \cite{NITS}:
\begin{equation}\label{eq: cosny}
\cos(y n) = T_n(\cos(y)).
\end{equation}
The $n^{th}$ Chebyshev polynomial can be written as
\begin{equation}\label{eq: Cheb}
T_n(x) = n\displaystyle\sum_{r=0}^n (-2)^r \frac{(n+r-1)!}{(n-r)!(2r)!} (1-x)^r = \sum_{r=0}^n (-2)^r \frac{n}{n+r}\binom{n+r}{2r}(1-x)^r.
\end{equation}
Combining equations (\ref{eq: cosny}) and (\ref{eq: Cheb}), we see that
\begin{equation}\label{eq: sumcosny}
  \cos(k_jLn) = \displaystyle\sum_{r=0}^n(-2)^r\frac{n}{n+r}\binom{n+r}{2r}(1 - \cos(k_jL))^r,
\end{equation}
and therefore, using \eqref{eq: discrete spectral zeta},
\begin{equation}
 \displaystyle \sum_{j=2}^{V} \cos(k_jLn) = \displaystyle \sum_{r=0}^n (-2)^r\frac{n}{n+r}\binom{n+r}{2r}Z(-r).
\end{equation}
\end{proof}

Merging Lemmas \ref{lemma: Hurwitz} and \ref{lemma: cheb}, we have shown that
\begin{align}\label{eq: z2}
&\sum_{j=2}^{V} \left(\sum_{n=0}^\infty \left(\frac{2n\pi}{L} + k_j\right)^{-2s} + \sum_{n=1}^\infty \left(\frac{2n\pi}{L} - k_j\right)^{-2s}\right)\nonumber\\
&\hspace{.5cm} = \displaystyle \frac{2L^{2s}\Gamma(1-2s)}{\pi}\sin(s\pi)\sum_{n=1}^\infty\sum_{r=0}^n (-2)^r\frac{n^{2s}}{n+r}\binom{n+r}{2r}Z(-r).
\end{align}
Notice that the sums cannot be interchanged because the convergence of the infinite sum is not absolute. Substituting \eqref{eq: z2} into \eqref{eq: qzf unsimplified} completes the proof of Theorem \ref{thm: main}.

\section{Applications}\label{sec: applications}

In this section, we demonstrate the usefulness of Theorem \ref{thm: main} by calculating the quantum spectral zeta function of an equilateral complete bipartite graph. We then use that spectral zeta function to compute the vacuum energy and spectral determinant associated with the Laplace operator on this quantum graph. Finally, we compare these results with previous results for a particular case: a star graph.

A \emph{complete bipartite graph}, denoted by $K_{m,p}$, is a graph with $m + p$ vertices and $mp$ edges whose vertices can be divided into two disjoint sets, set $\mathcal{U}$ of size $m$ and set $\mathcal{W}$ of size $p$, such that each vertex in set $\mathcal{U}$ is connected to every vertex in set $\mathcal{W}$; see Figure \ref{fig: complete bipartite}. The first Betti number of $K_{m,p}$ is $\beta = E - V + 1 = mp - (m + p) + 1$. The eigenvalues of a discrete complete bipartite graph $K_{m,p}$ are known to be 0, 1 (with multiplicity $m+p-2$), and 2 \cite{Chungbook}, and therefore
\begin{equation}
Z(s) = \sum_{j=1}^{m+p}\,\!^{^\prime} \lambda_j^{-s} = (m+p-2) + 2^{-s}.
\end{equation}

\begin{figure}[tbh]
\begin{center}
\begin{tikzpicture}
  [scale=1.5,every node/.style={circle,fill=black!, scale=.5}]
  \node (n1) at (0,.5) {};
  \node (n2) at (0,-.5)  {};
  \node (n3) at (1,1)  {};
  \node (n4) at (1,0) {};
  \node (n5) at (1,-1) {};

  \foreach \from/\to in {n1/n3,n1/n4,n1/n5,n2/n3,n2/n4,n2/n5}
    \draw (\from) -- (\to);

\end{tikzpicture}
\end{center}
\caption{\small{The complete bipartite graph $K_{2,3}$.}}\label{fig: complete bipartite}
\end{figure}
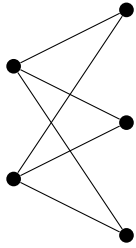 

By Theorem 1, the quantum spectral zeta function of the complete bipartite graph $K_{m,p}$, where each edge has length $L$, is given by
\begin{align}
\Z(s) &= \displaystyle \frac{2L^{2s}\Gamma(1-2s)}{\pi}\sin(s\pi)\sum_{n=1}^\infty\sum_{r=0}^n (-2)^r\frac{n^{2s}}{n+r}\binom{n+r}{2r}(m + p - 2 + 2^r)\nonumber\\
&\hspace{.5cm} + \left(4^s(mp - m - p) + 2\right) \left(\frac{L}{2\pi}\right)^{2s}\zeta_R(2s) \ .
\end{align}
Using the Chebyshev polynomial property \eqref{eq: Cheb}, the double summation can be written as
\begin{align}\label{eq: summation}
& \displaystyle (m+p-2)\sum_{n=1}^\infty\sum_{r=0}^n (-2)^r\frac{n^{2s}}{n+r}\binom{n+r}{2r}\left(1-\cos\left(\frac{\pi}{2}\right)\right)^{r}\nonumber\\
&\hspace{.5cm} + \sum_{n=1}^\infty\sum_{r=0}^n (-2)^r\frac{n^{2s}}{n+r}\binom{n+r}{2r}(1-\cos(\pi))^r\nonumber\\
&= \displaystyle (m+p-2)\sum_{n=1}^\infty n^{2s-1} T_n\left(\cos\left(\frac{\pi}{2}\right)\right) + \sum_{n=1}^\infty n^{2s-1}T_n(\cos(\pi))\nonumber\\
&= (m+p-2)\sum_{n=1}^\infty (-1)^n (2n)^{2s-1} + \sum_{n=1}^\infty (-1)^n n^{2s-1}
\end{align}
since we know by \eqref{eq: cosny} that $T_n(\cos(x)) = \cos(nx)$. Continuing, we can write \eqref{eq: summation} as
\begin{equation}
-(m+p-2)2^{2s-1}\eta(1-2s) - \eta(1-2s)
\end{equation}
where
\begin{equation}
\eta(z) = \sum_{n=1}^\infty \frac{(-1)^{n-1}}{n^z} 
\end{equation}
is the Dirichlet eta function. Note that
\begin{equation}\label{eq: Dirichlet eta function}
\eta(z) = (1-2^{1-z})\zeta_R(z) \hspace{1cm}\mbox{for } z \neq 1
\end{equation}
where $\zeta_R(z)$ is the Riemann zeta function \cite{NITS}. From this, we see that the quantum spectral zeta function of the equilateral complete bipartite graph $K_{m,p}$ is
\begin{align}\label{eq: qszf complete bipartite}
\Z(s) &= -\displaystyle \frac{L^{2s}\Gamma(1-2s)}{\pi}\sin(s\pi)[(m+p-2)4^s + 2]\eta(1-2s)\nonumber\\ 
&\hspace{.5cm}+ \left(4^s(mp-m-p) + 2\right) \left(\frac{L}{2\pi}\right)^{2s}\zeta_R(2s).
\end{align}

\subsection{Vacuum energy}

The vacuum energy of the Laplace operator is formally half the sum of the square roots of the eigenvalues,
\begin{equation*}
\frac{1}{2}\sum_{j=1}^\infty\,\!^{^\prime} k_j \ .
\end{equation*}
Hence, the zeta function regularization of the vacuum energy is
\begin{equation}
 E_c = \frac{1}{2}\Z(-1/2) \ .
\end{equation}
Given the form of $\Z(s)$ from equation \eqref{eq: qszf complete bipartite} (which is valid at $s = -\frac{1}{2}$ by Theorem \ref{thm: main}), 
\begin{align}\label{eq: vacuum energy}
E_c &=  \displaystyle \frac{\Gamma(2)}{2\pi L}\left[\frac{m+p-2}{2} + 2\right]\eta(2) + \left(\frac{mp-m-p}{2} + 2\right) \left(\frac{\pi}{L}\right)\zeta_R(-1)\nonumber\\
&= \frac{(3(m+p) - 2mp -6)\pi}{48L}
\end{align}
since $\Gamma(2) = 1$, $\zeta_R(-1) = -\frac{1}{12}$, and by \eqref{eq: Dirichlet eta function} $\eta(2) = \frac{1}{2}\zeta_R(2) = \frac{\pi^2}{12}$.

\subsection{Spectral determinant}

The spectral determinant of an operator is formally the product of its eigenvalues,
\begin{equation*}
\prod_{j=1}^\infty \hspace{-.05cm}\phantom{|}^{\prime} k_j^2 \ .
\end{equation*}
The zeta regularized spectral determinant is consequently defined as,
\begin{equation}
{\det}^\prime(\La) = \mbox{exp}(-\Z'(0)).
\end{equation}
 Using Theorem \ref{thm: main}, we know that for $\Re(s) < 0$, the quantum spectral zeta function of the complete bipartite graph $K_{m,p}$ is given by \eqref{eq: qszf complete bipartite}. However, the first term, which inherited the restriction to $\Re(s)<0$ from \eqref{eq: sum of Hurwitz}, is zero at $s=0$. Therefore, the derivative of the first term of \eqref{eq: qszf complete bipartite} at $s=0$ is 
\begin{align}\label{eq: D1}
-\Gamma(1)[(m+p-2) + 2]\eta(1) = -(m+p)\ln(2)
\end{align} 
since $\Gamma(1) = 1$ and $\eta(1) = \ln(2)$. The derivative of the second term at $s = 0$ is 
\begin{align}\label{eq: D2}
&\ln(4)(mp-m-p)\zeta_R(0) + 2(mp-m-p+2)\ln\left(\frac{L}{2\pi}\right)\zeta_R(0) + 2(mp-m-p+2) \zeta_R'(0)\nonumber\\
&\hspace{.5cm}= -\frac{\ln(4)(mp-m-p)}{2} - (mp-m-p+2)\ln(L)
\end{align}
since $\zeta_R(0) = -\frac{1}{2}$ and $\zeta_R'(0) = -\frac{\ln(2\pi)}{2}$. Hence, combining \eqref{eq: D1} and \eqref{eq: D2} we can see that the spectral determinant is
\begin{equation}\label{eq: spectral determinant}
{\det}^\prime(\La) = 2^{mp}L^{mp-m-p+2}.
\end{equation}

\subsection{Comparison with known results}\label{sec: comparison}

Here we consider the special case of a star graph. A \emph{star graph} with $E$ edges has $E$ vertices of degree one connected to a central vertex, and hence, $V=E+1$ vertices in all; see Figure \ref{fig: star}. In particular, a star graph with E edges is the complete bipartite graph $K_{1,E}$. 

Consider the equilateral quantum star graph where each edge has length $L$. We assume that the coordinate $x_e = 0$ at the vertices of degree one and $x_e = L$ at the center. The vertex conditions \eqref{eq: NK vertex conditions} become $f_e'(0) = 0$ and $\sum_{e\in\mathcal{E}} f_e'(L) = 0$.  Consequently, an eigenfunction has the form
$f_e(x_e) = a_e\cos \left(\frac{n\pi x_e}{L}\right)$ on edge $e$ and the Dirichlet eigenvalues $\{\left(\frac{n\pi}{L}\right)^2\}_{n=1}^\infty$ are in the spectrum of $\La$ with multiplicity one, which agrees with Lemma \ref{lemma: Dirichlet}.

\begin{figure}[tbh]
\begin{center}
\begin{tikzpicture}
  [scale=1.5,every node/.style={circle,fill=black!, scale=.5}]
  \node (n1) at (0,0) {};
  \node (n2) at (.3,.95)  {};
  \node (n3) at (-.8,.59)  {};
  \node (n4) at (-.8,-.59) {};
  \node (n5) at (.3,-.95) {};
  \node (n6) at (1,0)  {};

  \foreach \from/\to in {n1/n2,n1/n3,n1/n4,n1/n5,n1/n6}
    \draw (\from) -- (\to);

\end{tikzpicture}
\end{center}
\caption{\small{A star graph with $6$ vertices and $5$ edges; this is $K_{1,5}$.}}\label{fig: star}
\end{figure}
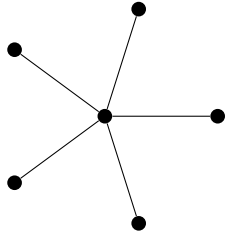

From \eqref{eq: vacuum energy}, we can see that for a star graph with $E$ edges (i.e., $K_{1,E}$),
\begin{equation}
E_c = \frac{\pi}{48L}(E - 3).
\end{equation}
 This agrees with the result of \cite{Full07} (also see \cite{BerHarWil09, HarKir11}) where the vacuum energy of an equilateral star graph was calculated. As the Casimir force is proportional to the derivative of $E_c$ with respect to $L$, this example was used to demonstrate that the Casimir force changes from attractive to repulsive depending on the number of edges.

From \eqref{eq: spectral determinant}, we can see that for a star graph with $E$ edges
\begin{equation}
{\det}^\prime(\La) = 2^EL.
\end{equation}
This agrees with the results of \cite{HarKir11} where the spectral zeta function and spectral determinant of a general quantum graph were computed using contour integrals.

\subsection*{Acknowledgements}
The authors would like to thank Gregory Berkolaiko and the anonymous referees, whose suggestions substantially simplified the presentation of the main result.    JH would like to thank the University of Warwick for their hospitality during his sabbatical where some of the work was carried out.  JH was supported by the Baylor University research leave program.  This work was partially supported by a grant from the Simons Foundation (354583 to Jonathan Harrison).

\bibliographystyle{abbrv}

\begin{thebibliography}{99}

\bibitem{Bipartitebook}
Asratian, A.~S., Denley, T.~M.~J. and H\"aggkvist, R.:
\newblock Bipartite graphs and their applications,
\newblock Cambridge University Press, Cambridge, U.K. (1998)

\bibitem{Below85}
von Below, J.:
\newblock A characteristic equation associated to an eigenvalue problem on {$c^2$}-networks,
\newblock Linear Algebra Appl. 71:309-325 (1985)

\bibitem{BerHarWil09}
Berkolaiko, G.,  Harrison, J.~M. and  Wilson, J.~H.:
\newblock Mathematical aspects of vacuum energy on quantum graphs,
\newblock  J. Phys. A 42(2):025204, 20 (2009)

\bibitem{BKbook}
Berkolaiko, G. and Kuchment, P.:
\newblock  Introduction to quantum graphs, vol. 186 of {\em Mathematical
  Surveys and Monographs}.
\newblock American Mathematical Society, Providence, RI (2013)

\bibitem{Chungbook}
Chung, F.:
\newblock Spectral graph theory, vol. 92 of {\em CBMS Regional Conference Series in Mathematics}. 
\newblock American Mathematical Society, Providence, RI (1997)

\bibitem{FriKar16}
Friedli, F. and Karlsson, A.:
\newblock Spectral zeta functions of graphs and the {R}iemann zeta function in the critical strip,
\newblock  preprint arXiv:1410.8010 (2016)

\bibitem{Full07}
Fulling, S.~A.,  Kaplan, L. and  Wilson, J.~H.:
\newblock Vacuum energy and repulsive {C}asimir forces in quantum star graphs,
\newblock  Phys. Rev. A (3) 76(1):012118, 7 (2007)

\bibitem{FulKucWil07}
Fulling, S.~A., Kuchment, P. and Wilson, J.~H.:
\newblock Index theorems for quantum graphs,
\newblock J. Phys. A 40(47):14165, 16  (2007)

\bibitem{HarKir11}
 Harrison, J.~M. and Kirsten, K.:
\newblock Zeta functions of quantum graphs,
\newblock  J. Phys. A 44(23):235301, 29 (2011)

\bibitem{HarKirTex12}
Harrison, J.~M., Kirsten, K. and Texier, C.:
\newblock Spectral determinants and zeta functions of Schr{\"o}dinger operators on metric graphs,
\newblock J. Phys. A 45(12):125206, 14 (2012)

\bibitem{HarWeyKir16}
Harrison, J.~M., Weyand, T. and Kirsten, K.:
\newblock Zeta functions of the {D}irac operator on quantum graphs,
\newblock J. Math. Phys. 57:102301, 10 (2016)

\bibitem{Has89}
Hashimoto, K.:
\newblock Zeta functions of finite graphs and representation of p-adic groups,
\newblock Adv. Studies Pure Math 15:211-280 (1989)

\bibitem{Kuc04}
Kuchment, P.:
\newblock Quantum graphs {I}. {S}ome basic structures,
\newblock Waves Random Media 14(1) S107--S128 (2004)

\bibitem{NITS}
 Olver, F.~W.~J., Lozier, D.~W., Boisvert, R.~F. and  Clark, C.~W. (eds):
\newblock  N{IST} handbook of mathematical functions,
\newblock U.S. Department of Commerce, National Institute of Standards and
  Technology, Washington, DC. Cambridge University Press, Cambridge, U.K. (2010)

\bibitem{Pank06}
Pankrashkin, K.:
\newblock Spectra of {S}chr\"odinger operators on equilateral quantum graphs,
\newblock  Lett. Math. Phys. 77(2) 139--154 (2006)

\bibitem{StaTer96}
Stark, H.~M. and Terras, A.~A.:
\newblock Zeta functions of finite graphs and coverings,
\newblock Adv. Math. 121:142--165 (1996)

\bibitem{Sun86}
Sunada, T.:
\newblock {$L$}-functions in geometry and some applications.
\newblock In:  Shiohama, K., Sakai, T. and Sunada, T. (eds), Curvature and topology of {R}iemannian manifolds ({K}atata,
  1985), Lecture Notes in Math. vol. 1201, pp. 266--284.
  Springer, Berlin (1986)


\end{thebibliography}

\Address
\end{document}